\theoremstyle{plain}
\newtheorem{thm}{Theorem}[section]
\newtheorem{lemma}[thm]{Lemma}
\newtheorem*{thm*}{Theorem}
\newtheorem*{lemma*}{Lemma}
\newtheorem*{prop*}{Proposition}
\newtheorem*{cor*}{Corollary}
\newtheorem*{conj*}{Conjecture}
\theoremstyle{definition}
\newtheorem{defn}[thm]{Definition}
\theoremstyle{remark}
\newcommand{\zz}{\mathbb{Z}}
\newcommand{\nn}{\mathbb{N}}
\newcommand{\pp}{\mathbb{P}}
\newcommand{\rr}{\mathbb{R}}
\newcommand{\cc}{\mathbb{C}}
\newcommand{\ind}{\mbox{$\perp \kern-5.5pt \perp$}}
\begin{document}

\title{Tying Up Loose Strands: Defining Equations of the Strand Symmetric Model}

\author{Colby Long and Seth Sullivant}

\email{celong2@ncsu.edu}
\email{smsulli2@ncsu.edu } 
               
\address{Department of Mathematics, Box 8205, North Carolina State University, Raleigh, NC, 27695-8205, USA }

\begin{abstract}
The strand symmetric model is a phylogenetic model designed to reflect 
the symmetry inherent in the double-stranded structure of DNA. 
We show that the set 
of known phylogenetic invariants for the general strand symmetric model
 of the three leaf claw tree entirely defines the ideal. 
 This knowledge allows one to determine the vanishing ideal of the general strand symmetric model of any trivalent tree. 
 Our proof of the main result is computational.
 We use the fact that the Zariski closure of the 
  strand symmetric model is the secant variety
 of a toric variety to compute the  dimension of the variety. 
 We then show that the known equations generate a prime ideal 
 of the correct dimension
 using elimination theory. 
\end{abstract}
\maketitle

%
%

\section{Introduction}
\label{Introduction}

The strand symmetric model is a phylogenetic model 
 designed to reflect the symmetry inherent in the 
double-stranded structure of DNA. This symmetry naturally imposes restrictions on
 the transition probabilities assigned to each edge and imposing
only these restrictions gives the general strand symmetric model (SSM). 
The phylogenetic invariants of a model are algebraic relationships that 
must be satisfied by the probability distributions arising from the model. 
Their study was originally proposed as a method for reconstructing phylogenetic
trees \cite{Cavender, Lake1987}, but they have also been useful theoretical tools in proving
identifiability results (see e.g.~\cite{Allman2006}).
Results in \cite{Draisma2009} imply that to determine generators of the ideal 
of phylogenetic invariants for any trivalent tree, it suffices
to determine them for the 
 claw tree, $K_{1,3}$. 
 
Though the general strand symmetric model itself is not group-based,
 Casanellas and the second author \cite{Casanellas2005} showed that it is still amenable to the
 Fourier/Hadamard transform technique of \cite{Evans1993, Szekely}. 
In the Fourier coordinates, it becomes evident that the parameterization of the model for $K_{1,3}$ is a coordinate projection of the secant variety of the Segre embedding of $\pp^3 \times \pp^3 \times \pp^3$. 
From this observation, 
the same authors were able to find $32$ degree three and $18$ degree four invariants of the homogenous ideal for $K_{1,3}$ and to show that these invariants generate the ideal up to degree four. Whether or not these equations generate the entire ideal was heretofore unknown.
  
In this paper, we show that these  $50$ equations in fact generate the entire ideal of the SSM for $K_{1,3}$. First, we use the parameterization of the model after the matrix-valued Fourier transform and the tropical secant dimension
technique of Draisma \cite{Draisma} to determine the dimension of the variety of probability distributions arising from the model. 
Then, using Macaulay2 \cite{M2}, we show that the ideal generated by these fifty equations defines a variety of the same dimension.  Finally, with the aid of symbolic computation we  generate a decreasing sequence of elimination ideals demonstrating that the ideal in question is prime. Thus, the variety defined by these equations is irreducible, contains the parameterization, and is of the same dimension as the parameterization, from which the result follows.

%
%

\section{Phylogenetic Invariants of the SSM model}
\subsection{Preliminaries}
\label{Preliminaries}

 The general strand symmetric model on an $n$-leaf rooted tree $T$ is a phylogenetic model of 4-state character change. Since the
  SSM is specifically intended to model DNA evolution, we 
  associate to each node $v$ of the tree a random variable $X_v$ 
  with state space corresponding to the  DNA bases
  \{A,C,G,T\} . Associated to each edge is a $4\times4$ transition
   matrix with rows and columns indexed by the bases. The entry $\theta_{ij}$ encodes the probability of changing from character $i$ to $j$ along that edge. In the double helix structure of DNA it is always the case that  the bases A and T are paired together and likewise for C and G. So that our model reflects this strand symmetry, we let $\pi = (\pi_A, \pi_C, \pi_G, \pi_T)$ be the distribution of the bases at the root, and set $\pi_A = \pi_T$ and $\pi_C = \pi_G$. 
Additionally, since a character transition in one strand will induce a corresponding transition in the other, we insist
$$ \theta_{AA} = \theta_{TT}, \theta_{AC} = \theta_{TG}, \theta_{AT} = \theta_{TA}, \theta_{CA} = \theta_{GT}, \theta_{CC} = \theta_{GG}, \theta_{CG} = \theta_{GC}, \theta_{CT} = \theta_{GA}.$$

 The key observation from \cite{Casanellas2005} is that the SSM is a matrix-valued group-based model. Identify the character states of the random variables of a phylogenetic model with elements of $G \times \{ 0, \ldots, l \} $ where $G$ is a 
 finite abelian group. Then each character state is indexed by an element 
 $\left( \begin{smallmatrix} j \\ i \\ \end{smallmatrix} \right )$ 
 where $j \in G$ and $i \in \{ 0, \ldots, l \}$.  In these indices, the entries of the transition matrix along edge $E$ are written $E^{j_1j_2}_{i_1 i_2}$ and the probability that the root is in state
$\left( \begin{smallmatrix} j \\ i \\ \end{smallmatrix} \right )$ 
is equal to $R_i^j$.
\begin{defn} 
A phylogenetic model is a \emph{matrix-valued group-based model} if for each edge, the matrix transition probablities satisfy 
 $$E^{j_1j_2}_{i_1 i_2} =  E^{k_1k_2}_{i_1 i_2}$$ 
 whenever $j_1 - j_2$ = $k_1 - k_2$ and the root distribution probabilities satisfy $R_i^j = R_i^k$.
\end{defn}
\noindent Let $G = \zz_2$ and $l = 1$, then the following identifications 
make manifest the matrix-valued group-based structure of the SSM:
$A = \left ( \begin{smallmatrix} 0 \\ 0 \\ \end{smallmatrix} \right )$,
$C = \left ( \begin{smallmatrix} 0 \\ 1 \\ \end{smallmatrix} \right )$,
$G = \left ( \begin{smallmatrix} 1 \\ 0 \\ \end{smallmatrix} \right )$,
$T = \left ( \begin{smallmatrix} 1 \\ 1 \\ \end{smallmatrix} \right )$.

The tree parameter of an algebraic model determines a polynomial map sending each choice of stochastic parameters into the probability space indexed by $n$-tuples of the characters. Thus, for the SSM of a tree $T$, we have the following map
$$ \phi_T : S_T \rightarrow \Delta^{4^n - 1} .$$
If we do not impose the stochastic conditions on the parameters then $\overline{ \rm{im}( \phi_T)}$, where the closure is taken in the Zariski topology, is a variety. In Section 16.1 of \cite{Casanellas2005}, the authors detail the group-valued Fourier 
transform and show how it can be used to obtain a simple parameterization for the closure of the cone over the SSM for $T = K_{1,3}$, denoted $CV(T)$.
Letting $q_{ijk}^{mno}$ be the transformed coordinates of the image space, we have 

$$
\psi : q_{ijk}^{mno} = d^{mm}_{0i}e^{nn}_{0j}f^{oo}_{0k} + d^{mm}_{1i}e^{nn}_{1j}f^{oo}_{1k}
 $$
 if $m + n + o \equiv 0 $ in $\zz_2$, and 
 $q_{ijk}^{mno} = 0 $  otherwise. 
Now to determine the defining equations for the SSM for $K_{1,3}$, it is enough to determine the defining equations for $\overline{ \rm{im}( \psi_T)} = CV(T)$. Let $I$ be the ideal generated by the fifty equations found in \cite{Casanellas2005}, the rest of the paper will be concerned with proving the following theorem.

\begin{thm} \label{main}
The vanishing ideal of the strand symmetric model for the graph $K_{1,3}$
is minimally generated by $32$ cubics and $18$ quartics.  The ideal
has dimension $20$, degree $9024$, and Hilbert series
$$ \frac{ 1 + 12t + 78t^{2}  + 332t^{3}  + 984t^{4}  + 1908t^{5}  + 2394t^{7}  + 1908t^{8}  + 984t^{9}  + 332t^{10}  + 78t^{11}   + 12t^{12}   + t^{13}}{(1-t)^{20}}.
$$
\end{thm}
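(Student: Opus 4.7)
The plan is to execute the three-step strategy sketched in the introduction. First, compute $\dim CV(T) = \dim \overline{\mathrm{im}(\psi)}$ directly from the parameterization. Second, use Macaulay2 to verify that the ideal $I$ generated by the $50$ known cubic and quartic invariants defines a variety of the same dimension, reading off the Hilbert series, degree, and minimality of the generating set at the same time. Third, certify that $I$ is prime by constructing a chain of elimination ideals. Once all three pieces are in place, we have $V(I) \supseteq CV(T)$ (the fifty generators vanish on $CV(T)$ by \cite{Casanellas2005}), $V(I)$ is irreducible of dimension $20$, and $CV(T)$ is also of dimension $20$, forcing $V(I) = CV(T)$ and hence $I$ equal to the vanishing ideal.

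For the dimension of $CV(T)$, I would apply Draisma's tropical secant dimension technique \cite{Draisma}. In the Fourier coordinates, $\psi$ exhibits $CV(T)$ as a coordinate projection of the first secant variety of a toric variety, namely the Segre variety of $\pp^3 \times \pp^3 \times \pp^3$ restricted to the support $\{(m,n,o) : m + n + o \equiv 0 \pmod{2}\}$. Tropicalizing each summand of $\psi$ reduces the dimension computation to a combinatorial problem on the span of the tropical polytope associated to the parameterization, yielding a lower bound; a matching upper bound of $20$ follows from a generic Jacobian rank computation (equivalently, a numerical evaluation in Macaulay2 \cite{M2}). The second step is purely mechanical: a Gröbner basis of $I$ in $\kk[q]$ gives the Hilbert series stated in the theorem, from which the dimension $20$, degree $9024$, and minimality of the $32 + 18 = 50$ generators all fall out.

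The hard part is the third step. A direct primary decomposition of $I$ is well beyond the reach of current symbolic software because $I$ sits in a polynomial ring with many variables and has large degree. Instead, the plan is to produce a decreasing sequence of elimination ideals $I = I_0 \supset I_1 \supset \cdots \supset I_N$, eliminating one variable at a time, and to verify at each stage that the elimination corresponds to a generically one-to-one projection so that primality is preserved. One continues until the final ideal $I_N$ is of a form whose primality is transparent, for instance a determinantal or toric ideal with a known classical description, after which primality lifts back up the chain to $I$. The main obstacle is strategic: the elimination order must be chosen so that the intermediate Gröbner bases remain computable and so that each projection has a recognizable geometric interpretation, which is where most of the creative work of the proof lies.
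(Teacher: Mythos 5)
Your three-step outline coincides with the paper's: Draisma's tropical secant dimension technique for $\dim CV(T)$, a Macaulay2 Gr\"obner basis computation for the Hilbert series, degree, and minimality of the $50$ generators, and a chain of elimination ideals to certify that $I$ is prime, followed by the irreducibility-plus-equal-dimension argument. The first two steps match the paper in substance. (For the dimension, the paper obtains the lower bound $20$ from the hyperplane-splitting form of Draisma's bound, Theorem \ref{hyperplane}, and the upper bound from $CV(T)\subseteq V(I)$ together with $\dim V(I)=20$, rather than from a Jacobian rank; either route works.)

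The genuine gap is in your primality step. You propose to check that each elimination ``corresponds to a generically one-to-one projection so that primality is preserved,'' but that is not a workable criterion. Generic injectivity is a statement about the variety $V(I_{k-1})$, and you cannot reason about a generic point of $V(I_{k-1})$ without already knowing it is irreducible --- which is what you are trying to establish. Moreover, the theorem requires $I$ itself to be prime (so that $I$ equals the vanishing ideal and not merely has the right radical), and a set-theoretic statement about the projection of $V(I_{k-1})$ cannot detect non-reduced or embedded behaviour of the ideal. The correct lifting mechanism, which the paper takes from \cite[Proposition 23]{Garcia2005} and proves as Lemma \ref{primelemma}, is purely ideal-theoretic: if $I_{k-1}$ contains $f=g x_k+h$ with $g,h$ not involving $x_k$ and $g$ a non-zero-divisor modulo $I_{k-1}$, then $I_{k-1}$ is prime if and only if the elimination ideal $I_k$ is. The remaining work is to exhibit such an $f$ at each of the ten stages and to certify the non-zero-divisor condition; the paper does the latter by exploiting the $(S_4\times S_4\times S_4)\rtimes S_3$ symmetry to reduce to three orbit representatives of homogeneous binomials, and the terminal ideal $I_{10}$ is small enough that Macaulay2's {\tt isPrime} finishes directly, so no classical determinantal or toric description of the last ideal is needed. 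Without a lemma of this type, your assertion that ``primality lifts back up the chain'' is unjustified.
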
 

Note that the Hilbert series suggests that the ideal is Gorenstein though
we have not been able to prove this.

\subsection{Dimension}
\label{Dimension}

A toric variety is a variety that is parametrized by monomials. 
Let $C \subset CV(T)$ be the toric variety parameterized  in each coordinate only by the monomial containing variables with 
zero in the first entry of the subscript.  
With this definition, $CV(T)$ is the secant variety $C*C$ and so  we can use existing techniques from \cite{Draisma} for computing the dimensions of secant varieties. 

The theorem from \cite{Draisma} which we wish to apply is conveniently formulated for our purposes by Theorem 15 from \cite{Allman}. We associate to each monomial $x_1^{u_1} x_2^{u_2} \ldots x_n^{u_n}$ in the parameterization of a toric variety an integer vector $u$ and let $A$ be the set of these integer vectors. 
Let $H = \{x \in \rr^d : c^Tx = e \}$ be a hyperplane in $\rr^d$ that splits $\rr^d$ into two components which we will label $H^+  = \{x \in \rr^d : c^Tx > e \}$ and $H^- = \{x \in \rr^d : c^Tx < e \}$. 

In our case, the matrix $A$ is a $12 \times 32$ matrix of rank $10$, with each column containing exactly threes $1$'s and nine $0$'s. 
If we let $\{e^{0}_{0}, e^{0}_{1}, e^{1}_{0}, e^{1}_{1} \} $ denote the standard basis in $\rr^{2 \times 2}$ then
the thirty-two columns of $A$ are
$$
\{ e^{m}_{i} \oplus  e^{n}_{j} \oplus e^{o}_{k}  \in \rr^{12} :  m + n + o \equiv 0 \mbox{ in } \zz_{2} \}.
$$

\begin{thm}\cite[Theorem 15]{Allman}
\label{hyperplane}
Let $V_A$ be a projective toric variety with corresponding set of exponent vectors $A \subset \nn^d$.  
Let $H$ be a hyperplane not intersecting $A$. Let $A^+ = A \cap H^+$ and $A^- = A \cap H^-$. Then $\text{dim} ( V_A * V_A ) \geq \text{rank}(A^+) + \text{rank}(A^-) - 1$. 
\end{thm}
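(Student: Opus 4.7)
The plan is to prove the bound by a one-parameter tropical degeneration of the parametrization of $V_A * V_A$, in the style of Draisma. A generic parametrization of the affine cone over $V_A * V_A$ is
$$ \phi(x, y, \lambda) \;=\; \bigl(x^a + \lambda\, y^a\bigr)_{a \in A}, \qquad (x, y, \lambda) \in (\kk^*)^d \times (\kk^*)^d \times \kk^* . $$
The strategy is to embed $\phi$ into a one-parameter family of parametrizations (obtained by composing with $t$-dependent torus actions on the source and target), identify the limit as $t \to \infty$, and invoke upper semicontinuity of image dimension in this family to conclude that the original image has dimension at least that of the limiting image.

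Using the defining data $c$ and $e$ of $H$, choose weight vectors $s, s' \in \rr^d$ with $s - s' = c$, and substitute $x_i \mapsto t^{s_i} x_i$, $y_i \mapsto t^{s'_i} y_i$, $\lambda \mapsto t^e \lambda$. The $a$-th coordinate of $\phi$ becomes
$$ t^{s \cdot a}\, x^a \;+\; t^{s' \cdot a + e}\, \lambda\, y^a . $$
Since $H$ does not meet $A$, the two exponents differ for every $a \in A$, with $s \cdot a$ strictly larger if and only if $a \in A^+$. Multiplying the $a$-th ambient coordinate by $t^{-\max(s \cdot a,\, s' \cdot a + e)}$ — a coordinatewise torus action on $\pp^{|A| - 1}$ — and letting $t \to \infty$, the $a$-th coordinate tends to $x^a$ when $a \in A^+$ and to $\lambda\, y^a$ when $a \in A^-$.

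The limiting parametrization therefore has image equal to the join $V_{A^+} * V_{A^-}$ of the two toric subvarieties sitting in disjoint coordinate subspaces of $\pp^{|A| - 1}$, of projective dimension $(\text{rank}(A^+) - 1) + (\text{rank}(A^-) - 1) + 1 = \text{rank}(A^+) + \text{rank}(A^-) - 1$. Upper semicontinuity of image dimension in the family then yields $\dim(V_A * V_A) \geq \text{rank}(A^+) + \text{rank}(A^-) - 1$. The main technical obstacle is making this degeneration rigorous: one must verify that the family of images has a well-defined flat limit, that this limit is captured by the explicit parametrization above, and that the $x$ and $y$ factors remain algebraically independent in the limit, so that the limit image fills the full join $V_{A^+} * V_{A^-}$ rather than a proper subvariety of it.
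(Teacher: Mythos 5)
The paper does not prove this statement at all: it is quoted verbatim from [Allman, Theorem 15], which in turn is a repackaging of Draisma's tropical lower bound for secant dimensions, so there is no in-paper argument to compare against. Your proposal is essentially a correct reconstruction of Draisma's proof specialized to the first secant variety of a toric variety, and the key computation is right: with $s-s'=c$ and the $\lambda\mapsto t^{e}\lambda$ twist, the $x$-term dominates exactly when $c\cdot a>e$, so after the coordinatewise rescaling the limit map is $(x,y,\lambda)\mapsto\bigl((x^{a})_{a\in A^{+}},(\lambda y^{a})_{a\in A^{-}}\bigr)$, whose image has affine dimension $\mathrm{rank}(A^{+})+\mathrm{rank}(A^{-})$. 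Two of the ``technical obstacles'' you flag at the end are not actually obstacles. First, no flat-limit machinery is needed: in characteristic $0$ the dimension of the closure of the image of a map equals the generic rank of its Jacobian, the Jacobian entries of the rescaled family are Laurent monomials in $t$ times fixed functions (take $c,e$ rational and clear denominators so the exponents are integers), and lower semicontinuity of matrix rank at $t=\infty$ gives $\dim\mathrm{im}(\phi_{\infty})\le\dim\mathrm{im}(\phi_{t})=\dim\mathrm{im}(\phi)$ directly --- note this is \emph{lower} semicontinuity of rank, not upper semicontinuity of image dimension as you wrote, though your inequality points the right way. Second, there is nothing to check about $x$ and $y$ staying ``algebraically independent in the limit'': they are disjoint sets of variables and $A^{+}$, $A^{-}$ index disjoint blocks of coordinates, so the limit image is literally the product of the two affine cones and its dimension is the sum of theirs; you do not even need it to fill the whole join, only to have the right dimension. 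The one genuine hypothesis you are using silently is the standard normalization that $A$ lies on an affine hyperplane off the origin (so that the monomial parametrization already sweeps out the affine cone and $\dim V_{A}=\mathrm{rank}(A)-1$); this holds in the paper's application, where every column of $A$ has coordinate sum $3$.
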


\begin{lemma}
\label{dim}
 dim$(CV(T)) = dim(V(I)) = 20.$
\end{lemma}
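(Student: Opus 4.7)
Since each of the fifty invariants generating $I$ was shown in \cite{Casanellas2005} to vanish on the image of $\psi$, the containment $CV(T) \subseteq V(I)$ is automatic, and with it the inequality $\dim CV(T) \leq \dim V(I)$. The plan is therefore to sandwich these dimensions between a lower bound $\dim CV(T) \geq 20$ and an upper bound $\dim V(I) \leq 20$.

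For the lower bound, I would apply Theorem \ref{hyperplane} to the toric variety $C = V_A$, using that $CV(T) = C * C$ is the affine cone over the projective secant variety $V_A * V_A$, so that passing to the cone adds one to the dimension. It therefore suffices to produce a hyperplane $H \subset \rr^{12}$ disjoint from $A$ for which $\text{rank}(A^+) + \text{rank}(A^-) - 1 \geq 19$. Since $\text{rank}(A) = 10$, there is no slack in the bound and both halves must independently attain rank $10$. I would search among hyperplanes defined by simple integer weight vectors $c$ that treat the three blocks of $\rr^{12}$ symmetrically---for example, weights that depend on the parities of the subscripts $(m,i)$, $(n,j)$, $(o,k)$ or on coordinated sums like $i+j+k$---and for each candidate verify directly on the resulting $12 \times 16$ submatrices that every row block retains a nonzero entry in each of its four rows and that the only linear dependencies among the rows are the two global row-sum relations of $A$. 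Once such a split is exhibited, Theorem \ref{hyperplane} gives $\dim V_A * V_A \geq 19$ and hence $\dim CV(T) \geq 20$.

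For the upper bound, I would pass the explicit thirty-two cubic and eighteen quartic generators of $I$ from \cite{Casanellas2005} into Macaulay2 \cite{M2}, compute a Gr\"obner basis with respect to a convenient graded monomial order, and extract the Krull dimension of the quotient ring. The expected output is $20$, yielding $\dim V(I) \leq 20$. Combining these three ingredients gives $20 \leq \dim CV(T) \leq \dim V(I) \leq 20$, forcing equality throughout.

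The principal obstacle is the hyperplane search. Because Theorem \ref{hyperplane} admits no slack here, a generic split of $A$ fails; moreover, the coupling condition $m+n+o \equiv 0 \pmod{2}$ links the three blocks and can easily introduce extra row-dependencies in one of $A^+$ or $A^-$ that drop its rank below $10$. A small combinatorial search is needed before one lands on a split that works; once such a hyperplane is identified, the remainder of the argument---combining the containment $CV(T) \subseteq V(I)$, the hyperplane theorem, and the symbolic dimension computation---is routine.
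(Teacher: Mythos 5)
Your proposal follows essentially the same route as the paper: the lower bound $\dim CV(T)\geq 20$ via Theorem \ref{hyperplane} applied to $C=V_A$ (with the $+1$ for passing from the projective secant variety to the affine cone), the computation $\dim V(I)=20$ in Macaulay2, and the containment $CV(T)\subseteq V(I)$ to force equality. The only ingredient you leave as a search that the paper makes explicit is the splitting hyperplane, which the authors take to be $c=(1,1,0,0,1,1,0,0,1,1,0,0)$ with $e=\tfrac{3}{2}$, yielding $\mathrm{rank}(A^+)=\mathrm{rank}(A^-)=10$.
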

\begin{proof}
Regard $C$ as a projective variety so that $C = V_A$ from Theorem \ref{hyperplane}. The hyperplane defined by the vector $c = (1,1,0,0,1,1,0,0,1,1,0,0)$ and $e = \frac{3}{2}$ gives 
$|A^+| = |A^-| = 16$ and  rank$(A^+)$ = rank$(A^-) =10 $. Therefore, by Theorem \ref{hyperplane}, as a projective variety dim($C*C) \geq 19$ and as an affine cone dim$(CV(T)) \geq 20$.
Using Macaulay2 we determine that $\text{dim}(V(I)) = 20$, and since $CV(T) \subseteq V(I)$, we must have  dim$(CV(T))= 20$. 
\end{proof}

\subsection{Primality}
\label{Primality}

In this section we outline our approach for determining if the ideal $I$ is prime. 

There are algorithms for determining whether or not an ideal is prime implemented in many computer algebra systems.  However, these algorithms do not terminate for many of the large ideals confronted in practice, including
the ideal $I$ generated by the cubics and quartics contained in $I(CV(K_{1,3}))$. 
We use the following result from \cite{Garcia2005}
which in certain cases allows one to determine the primality of an ideal by determining the primality of an ideal in fewer variables.

\begin{lemma}
\label{primelemma}
 \cite[Proposition 23]{Garcia2005} Let $k$ be a field and  $J \subset k[x_1, \ldots, x_n]$ be an ideal containing a polynomial $f = gx_1 + h$ with $g,h$ not involving $x_1$ and $g$ a non-zero divisor modulo $J$. Let $J_1 \cap k[x_2, \ldots, x_n]$ be the elimination ideal. Then $J$ is prime if and only if $J_1$ is prime.
\end{lemma}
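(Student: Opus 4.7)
The plan is to relate $J$ and $J_1$ via localization at $g$. For the forward direction, if $J$ is prime then $J_1 = J \cap k[x_2, \ldots, x_n]$ is the kernel of the composition $k[x_2, \ldots, x_n] \hookrightarrow k[x_1, \ldots, x_n] \twoheadrightarrow k[x_1, \ldots, x_n]/J$, which maps into a domain; hence $J_1$ is prime, by the standard fact that the contraction of a prime is prime.

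For the converse, the key observation is that the relation $gx_1 \equiv -h \pmod{J}$ allows one to eliminate $x_1$ after inverting $g$. I would first check that $g$ is also a non-zero divisor modulo $J_1$: if $gp \in J_1$ for some $p \in k[x_2, \ldots, x_n]$, then $gp \in J$, so $p \in J$ by hypothesis, whence $p \in J \cap k[x_2, \ldots, x_n] = J_1$. Thus both natural maps $k[x_1, \ldots, x_n]/J \hookrightarrow (k[x_1, \ldots, x_n]/J)[1/g]$ and $k[x_2, \ldots, x_n]/J_1 \hookrightarrow (k[x_2, \ldots, x_n]/J_1)[1/g]$ are injective.

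Next, I would show that the inclusion $k[x_2, \ldots, x_n]/J_1 \hookrightarrow k[x_1, \ldots, x_n]/J$ becomes an isomorphism after inverting $g$. Surjectivity follows from $x_1 \equiv -h/g$ in the localized quotient: any class can be reduced to a polynomial in $x_2, \ldots, x_n$ by repeatedly substituting $-h/g$ for $x_1$, at the cost of extra powers of $g$ in the denominator. Given this isomorphism, if $J_1$ is prime then $(k[x_2, \ldots, x_n]/J_1)[1/g]$ is a localization of a domain, hence a domain, and the isomorphism transports this to $(k[x_1, \ldots, x_n]/J)[1/g]$; since $k[x_1, \ldots, x_n]/J$ embeds into this domain, $J$ is prime. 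The only real technical point is the surjectivity claim, which is a direct consequence of the relation $x_1 = -h/g$ in the localized quotient combined with the non-zero divisor hypothesis on $g$ ensuring that the localizations remain faithful.
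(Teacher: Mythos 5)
Your proof is correct, but the converse direction takes a genuinely different route from the paper. The paper argues by contradiction with an elementary descent: given $a,b\notin J$ with $ab\in J$ of minimal $x_1$-degree, it uses $f=gx_1+h$ to replace $a$ by $a'=ga-h_dx_1^{d-1}f$, which lowers the $x_1$-degree of the product while keeping $a'\notin J$ (here the non-zero-divisor hypothesis on $g$ enters) and $a'b\in J$, contradicting minimality. You instead invert $g$ and identify $\bigl(k[x_2,\ldots,x_n]/J_1\bigr)[1/g]$ with $\bigl(k[x_1,\ldots,x_n]/J\bigr)[1/g]$ via the relation $x_1\equiv -h/g$, then pull the domain property back along the injection $k[x_1,\ldots,x_n]/J\hookrightarrow\bigl(k[x_1,\ldots,x_n]/J\bigr)[1/g]$; your preliminary checks (that $g$ remains a non-zero divisor mod $J_1$, and that both localization maps are faithful) are exactly the right ones, and the surjectivity step is justified since $g^d\cdot\overline{p}$ already lies in the image of $k[x_2,\ldots,x_n]$ in $k[x_1,\ldots,x_n]/J$ for $p$ of $x_1$-degree $d$. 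Your approach is more structural and explains \emph{why} the lemma holds --- the two quotient rings become isomorphic away from the hypersurface $g=0$, so in particular it also transports irreducibility, reducedness, etc.; the paper's argument is more elementary and self-contained, using nothing beyond polynomial degree bookkeeping. The only point worth making explicit in your write-up is the degenerate case $J=(1)$ (where both sides of the equivalence fail trivially), which guarantees $g\notin J$ and hence that the localizations are nonzero.
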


Proposition 23 of \cite{Garcia2005} was stated without proof, so we include
a proof of the result for completeness.

\begin{proof} ($\Rightarrow$) It is true in general that the elimination ideal of a prime ideal is prime. Suppose $J$ is prime and let $a,b \in  k[x_1, \ldots, x_n] \setminus J_1 $ such that $ab \in J_1$. Since $J_1 \subset J$, it must be that either $a$ or $b$ is in $J \setminus J_1$, otherwise it would contradict that $J$ is prime. Therefore, either $a$ or $b$ is in $k[x_1, \ldots, x_n] \setminus k[x_2, \ldots, x_n]$ and so $ab$ must have some term that involves $x_1$, which implies $ab \not \in J_1$, a contradiction. 

\medskip

\noindent ($\Leftarrow$)
Suppose $J_1$ is prime but that $J$ is not. Then there must 
exist $ a,b \in  k[x_1, \ldots, x_n] \setminus J $ with $ab \in J \setminus J_1$.
 Choose $a$ and $b$ so 
that $ab$ has minimal $x_1$-degree among all such pairs. Let $d$ be the $x_1$-degree of $a$ and $d'$ the $x_1$-degree of $b$. 
Since $ab \in J \setminus J_1$, $d + d' \geq 1$, and so  
without loss of generality we can assume $d \geq 1$. Write
$$a = h_0 + h_1x_1 + h_2x_1^2 + \ldots + h_{d}x_1^{d},$$ where each $h_i \in 
k[x_2, \ldots, x_n]$ and $h_{d} \not = 0$. Then since $f \in J$ and $g$ is not a zero divisor mod $J$, 
$a' :=  (ga - h_{d}x_1^{d - 1}f )$ is not in $J$ and has $x_1$-degree strictly less than $d$. It follows that 
$a'b $ has $x_1$-degree strictly less than that of $ab$. Finally, since $ab$ and $f$ are in $J$, 
$a'b = gab - h_{d}x_1^{d - 1}f b$ is 
 in $J $, contradicting the minimality of the $x_1$-degree of $ab.$
 \end{proof}
 
 \begin{lemma}
 \label{lisprime}
The ideal $I$ generated by the $32$ cubics and $18$ quartics of the general strand symmetric model for $K_{1,3}$ is prime.
\end{lemma}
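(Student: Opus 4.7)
The plan is to apply Lemma \ref{primelemma} iteratively, transferring the primality question for $I$ into the primality of a much smaller elimination ideal that a computer algebra system can handle directly.

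To get started, I would search among the $32$ cubic generators of $I$ for one of the form $f = g x_1 + h$, where $x_1$ is some coordinate $q_{ijk}^{mno}$ and $g, h \in k[x_2,\dots,x_n]$. Such generators are plentiful: each cubic in \cite{Casanellas2005} arises as a $3\times 3$ minor of a flattening associated to the secant parameterization, and hence is multilinear in the coordinates that appear in it. Given a candidate $f$, I would verify in Macaulay2 that the coefficient $g$ is a non-zero divisor modulo $I$, for instance by checking the equality $(I : g) = I$. Lemma \ref{primelemma} then lets me pass from $I$ to $I_1 := I \cap k[x_2,\dots,x_n]$, computed by an elimination Gr\"obner basis, while preserving primality in both directions.

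I would iterate this reduction to build a chain $I \supset I_1 \supset I_2 \supset \cdots \supset I_r$, where at each step I locate a polynomial in $I_j$ linear in a remaining variable with a non-zero-divisor leading coefficient, and eliminate that variable. Symmetry of $K_{1,3}$ under permutation of its three leaves and under the $\zz_2$-action on the superscripts $(m,n,o)$ can be used to guide the choice of which coordinate to eliminate at each stage. The goal is to iterate until the resulting ideal $I_r$ is small enough that Macaulay2's built-in primality routines terminate; once $I_r$ is verified prime, repeated application of Lemma \ref{primelemma} propagates primality back up to $I$.

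The principal obstacle is computational rather than structural. Each reduction requires both an elimination Gr\"obner basis and a separate Gr\"obner basis computation to verify the non-zero-divisor hypothesis, either of which can blow up if the variable order or the choice of coordinate to eliminate is poor. Consequently, the delicate part is picking a good sequence of eliminations: one must navigate through the lattice of elimination ideals along a path that stays within reach of the computer algebra system, while ensuring at each step that the leading coefficient is a non-zero divisor modulo the current ideal. Assuming such a path can be found and that its terminal ideal is directly certifiable as prime, Lemma \ref{primelemma} delivers the primality of $I$.
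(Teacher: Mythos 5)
Your proposal follows essentially the same route as the paper: repeated application of Lemma \ref{primelemma} to build a decreasing chain of elimination ideals $I = I_0 \supset I_1 \supset \cdots \supset I_{10}$, terminating in an ideal small enough for Macaulay2's {\tt isPrime} to certify. The only difference is a practical one you correctly flag as the delicate point: the paper makes the non-zero-divisor checks tractable by observing that each $g_k$ can be chosen as a product of multigraded-homogeneous binomials, that a non-zero divisor modulo $I$ remains one modulo every elimination ideal of $I$, and that the symmetry group $G \cong H \rtimes S_3$ reduces the verification to three orbit representatives.
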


\begin{proof} 

The proof is obtained by repeated application of Lemma \ref{primelemma}. The computations we describe can be found at 

\begin{center}
{\tt http://www4.ncsu.edu/$\sim$smsulli2/Pubs/LooseStrandsWebsite/SSM\_Supplement.html}
\end{center}
in the Macaulay2 file 
{\tt SSM\_Supplement} where the symbols 0,1,2, and 3 are substituted for $\left ( \begin{smallmatrix} 1 \\ 1 \\ \end{smallmatrix} \right )$,
$\left ( \begin{smallmatrix} 1 \\ 0 \\ \end{smallmatrix} \right )$,
$\left ( \begin{smallmatrix} 0 \\ 1 \\ \end{smallmatrix} \right )$, and
$\left ( \begin{smallmatrix} 0 \\ 0 \\ \end{smallmatrix} \right )$.

First, we let $I_0= I$. Beginning with $k = 1$, we find a polynomial $f_k = g_kx_k + h_k \in I_{k-1}$, verify that $g_k$ is not a zero-divisor mod  $I_{k-1}$, and then eliminate the corresponding variable to obtain the ideal $I_{k}$. In this way we generate a decreasing chain of elimination ideals 
$$
I = I_0 \supset I_1 \supset I_2 \ldots \supset I_{10} .
$$
Using the {\tt isPrime} function in Macaulay2, we show that $I_{10}$, and hence every ideal in the sequence, is prime.

\end{proof}

While this is the general outline of our approach, it is actually computationally 
easier to show that none of the $g_k$ that we encounter is a zero-divisor mod the respective elimination ideal first. 
Identify the new indices $0,1,2,$ and $3$ with the set of standard basis vectors $ \{ e_1, e_2, e_3, e_4 \}$ and define a multi-grading where the weight of  $q_{ijk}$ is $e_{i + 1} \oplus e_{j + 1} \oplus e_{ k + 1}$.
Let $q_{\alpha}q_{\beta} - q_{\gamma}q_{\delta}$ be a nontrivial binomial that is homogenous  with respect to this grading. As it so happens, we are always able to choose $f_k = g_k x + h_k$ so that $g_k$ is either a binomial of this form or a product of binomials of this form. There are two elementary observations that will be useful: 
\begin{enumerate}
\item $g = l_1l_2$ is a zero-divisor mod $J$ if and only if at least one of $l_1$ and $l_2$ is.
\item $g$ is not a zero-divisor mod any elimination ideal of $J$ if it is not a zero-divisor  mod $J$. 
\end{enumerate}
Thus, to show that none of the $g_k$ is a zero-divisor mod $I_{k - 1}$ it is enough to show that none of the homogenous binomials is a  zero-divisor mod $I$.

The symmetry of $I$ enables us to establish this by considering only a small subset of the homogenous binomials. There is a group action of $S_4 \times S_4 \times S_4\rtimes S_3$  on $Sec^2(Seg(\mathbb{P}^3\times\mathbb{P}^3\times\mathbb{P}^3 ))$, that comes from performing the rank-preserving column and transposition operations. Hence, the same group acts on $I(Sec^2(Seg(\mathbb{P}^3\times\mathbb{P}^3\times\mathbb{P}^3 ))))$, where 
 column operations correspond to interchanging the symbols in the indices of the variables and transposition operations correspond to permuting the order of the indices of each variable.  Let $G $ be the subgroup of elements of $ S_4 \times S_4 \times S_4\rtimes S_3$ satisfying $g \cdot q_{ijk}^{mno} = q_{i'j'k'}^{m'n'o'} $ with $m + n + o \equiv m' + n' + o'$ in $\zz_2$ for each of the 64 variables. Since 
  $$I(CV(T)) = I(Sec^2(Seg(\mathbb{P}^3\times\mathbb{P}^3\times\mathbb{P}^3 )))) \cap \cc[q_{ijk}^{mno} : m + n + o = 0], $$ 
$G$ acts on $I(CV(T))$, and since the generators of $I$ generate $I(CV(T))$ up to degree four, $G$ acts on $I$ as well.
Let $H$ be the subgroup of $G$ generated by elements that correspond to interchanging symbols in the indices. For example, $h = ((01),(01)(23), (01)) \in H$ interchanges $0 \leftrightarrow 1$ in the first index, $0\leftrightarrow1$ and $2 \leftrightarrow 3$ in the second, and $0 \leftrightarrow 1$ in the third so that $h\cdot (q_{021}q_{113} - q_{013}q_{121}) = (q_{130}q_{003} - q_{103}q_{030})$.  Then 
\begin{align*}
 H = \langle  & ((01) , id , id) , ( id, (01) , id) , ( id , id , (01)),
                     ((23) , id , id) , ( id, (23) , id) , \\  &  ( id , id , (23)),   
                     ((0213) , (0213) , id) , ( (0213), id , (0213))
        \rangle
\end{align*} is a 256-element normal subgroup and $G \cong H \rtimes S_3$. The set of homogeneous binomials partitions into three orbits under the action of $G$. In the file {\tt SSM\_Supplement} we show that none of the homogeneous binomials is a zero-divisor by showing that one representative of each of the orbits under the group action is not a zero-divisor.

 Having shown that $I$ is prime, we are able to give a short proof of  Theorem \ref{main}.

\begin{proof}[Proof of  Theorem \ref{main}]
The containment $I \subset I(CV(T))$ implies that $CV(T) \subset V(I)$. By Lemma \ref{lisprime}, $I$ is prime and so $V(I)$ is an irreducible variety. By Lemma \ref{dim}, $CV(T)$ is an irreducible variety contained in an irreducible variety of the same dimension, so $CV(T) = V(I)$ and $I = I(CV(T))$. Knowing explicit generators of the vanishing ideal of the strand symmetric model for the graph $K_{1,3}$, the claims about the rank, degree, and Hilbert Series of the ideal are easily verified by the Macaulay2 code in {\tt SSM\_Supplement}.
\end{proof}

 \bibliography{references_1}
\bibliographystyle{plain}
\end{document}